\documentclass{amsart}

\usepackage[utf8]{inputenc}
\usepackage{amsmath,amsfonts,amsthm,xypic}
\xyoption{all}

             %
             %

\def \Hil {\mathcal{H}}

\def \R {\mathbb{R}}

\def \C {\mathbb{C}}

\newcommand{\pd}[1]{\frac{\partial }{\partial #1}}
\newcommand{\pdc}[2]{\frac{\partial #1}{\partial #2}}

\newtheorem{definition}{Definition}
\newtheorem{lemma}{Lemma}

\newtheorem{note}{Note}

\begin{document}
\title{The Ehrenfest picture and the geometry of Quantum Mechanics}

\author{J. Clemente-Gallardo}
\address{ BIFI-Departamento de F\'{\i}sica Te\'orica and
Unidad Asociada IQFR-BIFI,  Universidad de
  Zaragoza,  Edificio I+D-Campus   R\'{\i}o Ebro, Mariano Esquillor
  s/n, 50018 Zaragoza (Spain)}  
\email{jesus.clementegallardo@bifi.es}

\author{G. Marmo}
\address{
Dipartimento di Fisica dell'Universit\`a di
 Napoli ``Federico II'' and INFN--Sezione di Napoli, Complesso
  Universitario di Monte Sant'Angelo, via Cintia, I-80126 Napoli,
  (Italy)}
\email{marmo@na.infn.it}

\maketitle

\begin{abstract}
In this paper we develop a picture of Quantum Mechanics based on the
description of physical observables in terms of expectation value
functions, generalizing thus the so called Ehrenfest theorems for
quantum dynamics. Our basic technical ingredient is 
 the set of tools which has been developed in the last years for the geometrical
formulation of Quantum Mechanics. In the new picture, we analyze the problem of the
dynamical equations, the uncertainty relations and interference and
illustrate the construction with the simple case of a two-level system.
\end{abstract}

\section{Introduction}

The paradigmatic example of matter waves (see
\cite{ClementeGallardo:2008p614}), e.g., electron
interference,  shows very neatly that we have at least three important
aspects  of quantum systems:
\begin{itemize}
\item a wave-like behavior incorporated in the Schrödinger picture;
\item a corpuscular-like behavior at the detector  giving rise to the
  Heisenberg picture;
\item and a probabilistic-statistical behavior which emerges from the
  erratic behavior of the clicking of the detectors. 
\end{itemize}
A closer scrutiny of this last aspect suggests that a good approach to
the description of a quantum system would be to formulate Quantum
Mechanics in terms of expectation value functions.

Let us assume that we have multiple copies of a given quantum system
and let us consider the experiment in which we measure the position of
the detector which clicks.  After a sufficient long time, we would end
up with a set of values in the form
\begin{equation}
  \label{eq:1}
  e_{Q}(\psi)=\frac{\langle\psi|\hat Q|\psi\rangle}{\langle\psi|\psi\rangle} 
\end{equation}
where $|\psi\rangle$ represents in the Dirac bra-ket notation the
vectors of a Hilbert space ${\mathcal H}$ and $\hat Q$ represents the
position operator acting on ${\mathcal H}$. 
More generally, if we perform measurements of other observable, say
$\hat A$, we would obtain 
\begin{equation}
  \label{eq:2}
  e_{A}(\psi)=\frac{\langle\psi|\hat A|\psi\rangle}{\langle\psi|\psi\rangle}. 
\end{equation}
Thus the probabilistic-statistical aspect of Quantum Mechanics is
captured by elaborating a picture completely given in terms of
expectation value functions. Historically, expectation value functions
appear already in the so-called \textit{Ehrenfest theorem} (see
\cite{Ehrenfest1927}) . In this
work, we ellaborate on it to define an alternative picture of Quantum
Mechanics.

We will argue that expectation value functions are able to provide an
alternative picture of Quantum Mechanics with respect to the
Schrödinger or the Heisenberg ones.  For instance, for a system
described by a Hamiltonian operator in the form
$$
\hat H=\frac 1 {2m}\hat P ^{2}+\hat V(x),
$$
we would obtain a dynamical system at the level of the expectation
value functions as
\begin{equation}
  \label{eq:4}
  \frac d{dt} e_{Q}(\psi)=\frac 1 m e_{P}(\psi); \qquad
  \frac d{dt} e_{Q}(\psi)=-e_{\mathrm{grad} V}(\psi);
\end{equation}
which is usually known as the Ehrenfest theorem. It is appropriate to
remark that already Koopman \cite{Koopman1931} and von Neuman
\cite{Neumann1932} proved that both Classical
and Quantum Mechanics can be treated in this picture.
We recall that while Quantum Mechanics  would be formulated on the
Hilbert space of square-integrable complex-valued functions on the
``configuration'' or the ``momentum'' space to ensure the
irreducibility of the representation of the Heisenberg-Weyl algebra; Classical Mechanics would
be formulated on the Hilbert space of square-integrable complex-valued
functions on the full phase-space.

\section{The tensors of the geometric formulation of Quantum
  Mechanics} 
\label{sec:geom-form-quant}

The aim of this section is simply to provide a
tensorial characterization of Quantum 
Mechanics.  We shall see later how  the tensors obtained  now will
provide us with the necessary tools for our definition of the
Ehrenfest picture of Quantum Mechanics.  
The section contains a brief
summary of the results presented in several recent works as
\cite{Carinena2007b,Carinena2007a,ClementeGallardo:2008p614,Aniello,Aniello2011}. We
also address the interested reader to some other references by serveral
authors covering similar topics. Just
to mention the most relevant references ordered chronologically, let us refer to former
interesting approaches as \cite{STROCCHI1966}, the seminal
work by Kibble \cite{Kibble:1979p7279}, the works by Cantoni
(\cite{cantoni1975, cantoni1977, cantoni1977b, cantoni1980,
  cantoni1985}), by Cirelli and co-workers (\cite{cirelli1983,
  cirelli1984}), the more physically oriented
approach by Heslot \cite{Heslot1985a},  Bloch's paper
(\cite{Bloch1987}),
the work by Anandan\cite{springerlink:10.1007/BF00732829,Anandan1990},
and then Ashtekar and Schilling 
\cite{Ashtekar:1997p906}. There are several interesting works by
Brody and coworkers, \cite{Brody2001} being the 
one closer to the work presented here and also from Spera and coworkers
(\cite{Benvegnu2004,spera1993generalized}) .

\subsection{Representation of pure states}
The first step is to replace the Hilbert space ${\mathcal H}$ which models
the set of vector  states by a description in terms of real differential
manifolds.  Thus we replace  the Hilbert
space $\mathcal{H}$ with its realification $\mathcal{H}_\R:=M_Q$.
In this realification process the complex structure on
$\Hil$ will be represented by a tensor $J$ on $M_Q$ as we will see. We
assume that the dimension of the manifold $M_{Q}$ is equal to $2n$.

The natural identification is then  provided by choosing a basis
$\{|z_k\rangle\}$ in $\Hil$ and splitting the corresponding coordinates
into their real and imaginary parts:

$$
|\psi\rangle=\sum_k \psi_k |z_k\rangle \qquad \psi_k\to
\psi_k^R+i\psi_k^I 
$$

Then,
$$ \{ \psi_1, \cdots, \psi_n \} \in
\Hil \mapsto \{ \psi_1^R,\cdots, \psi_n^R,  \psi_1^I,\cdots \psi_n^I\}\equiv
(\Psi_R, \Psi_I)\in \Hil_\R. 
$$
Under this transformation, the Hermitian product becomes, for $\psi^1, \psi^2\in
\Hil$ 
$$
\langle(\Psi^1_R,\Psi^1_I), (\Psi^2_R,\Psi^2_I)\rangle =(\langle \Psi^1_R, \Psi^2_R\rangle + \langle \Psi^1_I,
\Psi^2_I\rangle )+i( \langle \Psi^1_R, \Psi^2_I\rangle -\langle \Psi^1_I,\Psi^2_R\rangle ).
$$

To consider $\Hil_\R$ just as a real differential manifold, the algebraic
structures available on $\Hil$ must be converted into tensor fields on
$\Hil_\R$. Consider first the tangent and cotangent bundles $T\Hil$
and $T^*\Hil$ and the following structures:
\begin{itemize}
\item The complex structure of $\mathcal{H}$ is translated into a
  tensor 
$$
J:M_Q\to M_Q,
$$
satisfying $J(\Psi_R, \Psi_I)=(-\Psi_I, \Psi_R)$ for any point
$(\Psi_R, \Psi_I)\in M_Q$. It is immediate to verify that
in this case
$$
J^2=-\mathbb{I}.
$$

\item  The linear structure available in
$M_Q$ is encoded in the vector field $\Delta$
$$
\Delta:M_Q\to TM_Q \quad \psi \mapsto (\psi, \psi).
$$
\item With every vector we can associate a vector field
$$
X_\psi:M_Q\to TM_Q \quad \phi \to (\phi, \psi)
$$
These vector fields are the infinitesimal generators of the vector
group $M_Q$ acting on itself. 
\item The Hermitian tensor $\langle\cdot, \cdot \rangle$ defined on
  the complex vector space $\mathcal{H}$, can be written in
  geometrical terms as
$$
\langle X_{\psi_1}, X_{\psi_2}\rangle(\phi) =\langle \psi_1, \psi_2\rangle.
$$
 On the ``real manifold'' the Hermitian scalar product may be  written as 
$$
\langle \psi_1, \psi_2\rangle=g(X_{\psi_1}, X_{\psi_2})+i\,\omega (X_{\psi_1}, X_{\psi_2}),
$$
where $g$ is now a symmetric tensor and $\omega $ a skew-symmetric
one. 
\end{itemize}

The properties of the Hermitian product ensure that:
\begin{itemize}
\item the symmetric tensor is positive definite and non-degenerate, and hence
  defines a Riemannian structure on the real vector manifold.
\item the skew-symmetric tensor is also non degenerate, and is closed with
  respect to the natural differential structure of the vector space. Hence, the
  tensor is a symplectic form (see also \cite{marmo1996})
\end{itemize}

As the inner product is sesquilinear, it satisfies
$$
\langle \psi_1, i\psi_2\rangle =i\langle \psi_1, \psi_2\rangle, \qquad
\langle i\psi_{1}, \psi_{2}\rangle =-i\langle
\psi_{1},\psi_{2}\rangle. 
$$
This  implies
$$
g(X_{\psi_1}, X_{\psi_2})=\omega (JX_{\psi_1}, X_{\psi_2}).
$$
We also have that $J^2=-\mathbb{I}$, and hence that the triple $(J, g, \omega )$
defines a K\"ahler structure (see \cite{cirelli1983}).
This implies, among other things, that the tensor 
$J$ generates both finite and infinitesimal transformations which are
orthogonal and symplectic.

The choice of the basis also allows us to introduce adapted coordinates for the
realified structure:
$$
\langle z_k, \psi\rangle =(q_k+ip_k)(\psi),
$$ 
and write the geometrical structures introduced above as:
$$
J=\partial_{p_k}\otimes dq_k-\partial_{q_k}\otimes dp_k 
\quad
g=dq_k\otimes dq_k+dp_k\otimes dp_k
\quad
\omega=dq_k\land dp_k
$$

\begin{note}
  If we represent the points of ${\mathcal H}$ by using complex
  coordinates we can write the Hermitian structure by means of
  $z_n=q^n+ip_n$:
$$
h=\sum_kd\bar z_k\otimes dz_k,
$$
where of course
$$
\langle X_{\psi_1}| X_{\psi_2}\rangle=h(X_{\psi_1}, X_{\psi_2}),
$$
the vector fields now being the corresponding ones on the complex
manifold.
\end{note}
In an analogous way we can consider a contravariant version of these
tensors.  The coordinate expressions with respect to the natural basis are:
\begin{itemize}
\item the Riemannian structure 
  \begin{equation}
G=\sum_{k=1}^n\left(\pdc{}{qk}\otimes \pdc{}{q^k}+\pdc{}{p_k}\otimes \pdc{}{p_k}\right),\label{eq:28}
\end{equation}
\item  the Poisson tensor
  \begin{equation}
\Omega=\sum_{k=1}^n\left(\pdc{}{q^k}\land \pdc{}{p_k}\right)\label{eq:29}
\end{equation}

\item while the complex structure has the form 
  \begin{equation}
J=\sum_{k=1}^n\left(\pdc{}{p_k}\otimes d{q^k}-\pdc{}{q^k}\otimes
  d{p_k}\right)\label{eq:30}
\end{equation}
\end{itemize}

\subsubsection{Example I: the Hilbert space of a two level quantum system}
\label{sec:example-i:-hilbert}
For a two levels system we will consider an orthonormal basis on
$\C^2$, say $\{|e_1\rangle,|e_2\rangle \} $. We introduce thus a set
of coordinates
$$
\langle e_j|\psi\rangle=z^j(\psi)=q^j(\psi)+ip_j(\psi) \qquad j=1,2.
$$

In the following we will use $z^j$ or $q^j$, $p_j$ omitting the
dependence in the state $\psi$ as it is usually done in differential
geometry.

The set of physical states is not equal to $\C^2$, since we have to
consider the equivalence relation given by the multiplication by a
complex number
i.e.
$$
\psi_1\sim \psi_2 \Leftrightarrow \psi_2=\lambda\psi_1 \qquad
\lambda\in \C_0=\C-\{ 0\}.
$$
And besides, the norm of the state must be equal to one. These two
properties can be encoded in the following diagram:
$$
\xymatrix{
\C^2-\{\vec 0\}:=\C^{2}_{0}\ar[rr]^\pi\ar[dr]& & S^2 \\
& S^3\ar[ur]_{\tau_H} &
}
$$
where $S^2$ and $S^3$ stand for the two and three dimensional spheres,
and the projection $\tau_H$ defines the Hopf fibration.  The projection
$\pi$ is associating each vector with the one-dimensional complex
vector space to which it belongs. Thus we see how this projection
factorizes through a projection onto $S^3$ and a further projection
given by the Hopf fibration, which is a $U(1)$--fibration.

The Hermitian inner product on $\C^2$ can be written in the
coordinates $z_1, z_2$ as
$$
\langle \psi|\psi \rangle=\bar z_jz^k \langle e_k|e_j\rangle=\bar z_jz^j.
$$
Equivalently we can write it in real coordinates $q,p$ and obtain:
$$
\langle\psi|\psi\rangle=p_1^2+p_2^2+(q^1)^2+(q^2)^2
$$

We can also obtain these tensors in contravariant form if we take as
starting point the Hilbert space ${\mathcal H}=\C^2$. If we repeat the steps
above, we obtain the two contravariant tensors:
$$
G=\pd{q^1}\otimes \pd{q^1}+\pd{p_1}\otimes \pd{p_1}+\pd{q^2}\otimes
\pd{q^2}+\pd{p_2}\otimes \pd{p_2};  \qquad
\Omega=\pd{q^1}\land \pd{p_{1}}+\pd{q^2}\land \pd{p_2}.
$$

Other   tensors encode the complex vector space structure of
$\Hil=\C^2$:
\begin{itemize}
\item the dilation vector field
  $\Delta=q^1\pd{q^1}+p_1\pd{p_1}+q^2\pd{q^2}+p_2\pd{p_2}$ ,
\item and the complex structure tensor $J=dp_1\otimes
  \pd{q^1}-dq^1\otimes \pd{p_1}+dp_2\otimes
  \pd{q^2}-dq^2\otimes \pd{p_2}$.
\end{itemize}

By combining both tensors, we can define the infinitesimal generator
of the multiplication by a phase:
$$
\Gamma=J(\Delta)=p_1\pd{q^1}-q^1\pd{p^1} + p_2\pd{q^2}-q^2\pd{p^2}.
$$
Thus we see how $\Delta$ is responsible for the quotienting from
$\C^2_0$ onto $S^3$, while $\Gamma$ is responsible for the Hopf
fibration $S^3\to S^2$.

\subsection{The complex projective space}

In the formulation as a real vector space, we can represent the
multiplication by a phase on the manifold $M_{Q}$  as  a transformation
whose infinitesimal generator is written as:  
\begin{equation}
  \label{eq:phase}
  \Gamma=\sum_{k}\left ( p_{k}\frac {\partial}{\partial q^{k}} -q^{k}
\frac {\partial}{\partial p_{k}}\right ).
\end{equation}

We can also consider another important vector field, which encodes the
linear space structure of the tangent bundle $TM_Q$. In order to avoid
singularities let us eliminate the zero section of the bundle $TM_Q$
and denote the resulting space by $T_0M_Q$. We remind the
reader that $M_Q$ is just the realification of a complex vector space
and, as such, we can encode its linear structure  in the dilation
vector field, which reads:
\begin{equation}
  \label{eq:15}
  \Delta:M_Q\to T_0M_Q; \qquad \psi\mapsto (\psi, \psi)
\end{equation}

In the coordinate system  $(q^k, p_j)$, it  takes the form
\begin{equation}
  \label{eq:14b}
  \Delta=q^k\frac{\partial}{\partial q^k}+p_k\frac{\partial}{\partial p_k}
\end{equation}

We are particularly interested in the relation of the vector fields
$\Delta$ and $\Gamma$.  In particular:
\begin{lemma}
  $\Delta$ and $\Gamma$ define a foliation on the manifold $M_Q$.
\end{lemma}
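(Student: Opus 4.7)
The plan is to apply Frobenius' theorem: it suffices to show that the distribution spanned by $\Delta$ and $\Gamma$ is involutive and (away from the zero vector) of constant rank two. Since both vector fields are given explicitly in the adapted coordinates $(q^k,p_k)$, this reduces to a short coordinate computation plus a linear-independence check.

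First I would compute the Lie bracket $[\Delta,\Gamma]$ directly from the expressions
$$
\Delta=q^k\pd{q^k}+p_k\pd{p_k},\qquad
\Gamma=p_k\pd{q^k}-q^k\pd{p_k}.
$$
Using $[X,Y]^j = X^i\partial_i Y^j - Y^i\partial_i X^j$, the $\partial_{q^k}$-component is $\Delta(p_k)-\Gamma(q^k)=p_k-p_k=0$ and similarly the $\partial_{p_k}$-component is $-q^k-(-q^k)=0$. Hence $[\Delta,\Gamma]=0$, so the distribution is trivially involutive; this is the key step but it is really just bookkeeping.

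Next I would verify that $\Delta$ and $\Gamma$ are pointwise linearly independent on $M_Q\setminus\{0\}$. The value of $\Delta$ at $\psi$ is $\psi$ itself, and $\Gamma=J(\Delta)$, so a linear dependence $\alpha\Delta_\psi+\beta\Gamma_\psi=0$ would read $\alpha\psi+\beta J\psi=0$ in the underlying real vector space. Since $J^2=-\mathbb{I}$ has no real eigenvalues, this forces $\alpha=\beta=0$ whenever $\psi\neq 0$. Therefore the distribution has constant rank two on $M_Q\setminus\{0\}$, and by Frobenius it integrates to a regular $2$-dimensional foliation there. The leaves are precisely the orbits of the abelian group generated by the commuting flows of $\Delta$ and $\Gamma$, i.e.\ the orbits of the $\mathbb{C}_0$-action $\psi\mapsto\lambda\psi$ introduced earlier in the Hopf-fibration diagram.

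The only genuine subtlety is the origin, where both $\Delta$ and $\Gamma$ vanish and the distribution degenerates; this is the point at which one must be slightly careful about what is meant by ``foliation on $M_Q$''. I would handle it by restricting attention to $M_Q\setminus\{0\}$ (consistently with the passage to $\C_0^n$ used in the projective-space discussion that follows), or equivalently by viewing the origin as a singular leaf of the associated singular (Stefan--Sussmann) foliation. With that convention the statement follows immediately from the involutivity computation above.
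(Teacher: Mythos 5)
Your proof is correct and follows essentially the same route as the paper: the paper also notes $\Gamma=J(\Delta)$ and deduces the foliation from the fact that the two vector fields commute, leaving Frobenius implicit. Your additional checks (pointwise linear independence via $J^2=-\mathbb{I}$ and the treatment of the origin, where the distribution drops rank and one must either delete the zero vector or speak of a singular Stefan--Sussmann foliation) are sound refinements of points the paper glosses over.
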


\begin{proof}
  It is simple to relate $\Delta$ with $\Gamma$ via the complex
  structure, in the form:
  \begin{equation}
    \label{eq:16}
    \Gamma=J (\Delta).
\end{equation}
Then it is straightforward to prove that both vector fields commute. 
  \end{proof}

We thus have an integrable distribution defined on the manifold
$M_Q$. We can thus define the corresponding quotient manifold
identifying the points which belong to the same orbit of the
generators $\Gamma$ and $\Delta$. Notice that, from the physical point
of view, this corresponds to the identification of points in the same
ray of the Hilbert space.

\begin{definition}
  The resulting quotient manifold, denoted as ${\mathcal P}$, defined as
  \begin{equation}
    \label{eq:17b}
    \pi:M_Q\to {\mathcal P}
  \end{equation}
 is  the
  \textbf{complex projective space} and its points represent the
  physical pure states of a quantum system. We will denote by
  $[\psi]$ the point in ${\mathcal P}$ which is the image by $\pi$ of a
  point $\psi\in M_Q$:
  \begin{equation}
    \label{eq:34}
   {\mathcal P}\ni  [\psi]:=\pi (\psi) \qquad \psi\in M_Q
  \end{equation}
\end{definition}

\section{The Ehrenfest picture for pure states}
Having introduced the necessary tools, let us proceed to describe the
Ehrenfest picture of Quantum Mechanics.  
As we saw in the introduction,  the key point consists in the
description of physical observables in terms of expectation value
functions. If we accept the point of view of formulating Quantum Mechanics in
terms of expectation value functions, we must also accept that they
are not defined on ${\mathcal H}$ but rather they are functions on the
arguments
$\frac{|\psi\rangle\langle\psi|}{\langle\psi|\psi\rangle}$. This means
that they are really functions defined 
on the Hilbert space ${\mathcal H}$ (or $M_{Q}$) which represent functions
on the projective space ${\mathcal P}$ corresponding to ${\mathcal H}$. This
change in the carrier space is not without consequences because now
the carrier space is not linear anymore: instead of a Hilbert space we
must consider a Hilbert manifold. This implies that we have to face
the problems of describing interference, superpositions of states  and
the composition of expectation value functions to replace the
multiplication rule of operators. 

Therefore, we must consider
functions on $M_{Q}$ 
which are constant along the fibers of the fibration
$\pi:M_Q\to {\mathcal P}$. Thus functions, meaningful from a physical
point of view, correspond to 
\begin{equation}
  \label{eq:18b}  
e_A(\psi)=\frac{\langle\psi|A\psi\rangle}{\langle\psi|\psi\rangle}
\end{equation}

These are functions on $M_Q$ which are in one-to-one
correspondence with the functions on the projective space ${\mathcal
  P}$ as pullback via the projection defined in
Eq. (\ref{eq:17b}). Obviously, they are no longer quadratic; but this
is a natural 
property taking into account that the projective space ${\mathcal P}$ has
lost the linear structure of $M_Q$ to become just a differential manifold.

\subsection{The spectral information}
One of the main aspects we must recover from the usual picture is the
spectrum of the operators. Indeed, in the usual descriptions of
Quantum Mechanics the spectrum of the observables encodes  most of the
information associated to the corresponding physical quantity. Thus, in the
Hilbert space description, given the observable $\hat A$, we associate
with it the basis of eigenvectors $\{ |v_{a}\rangle\}$ and the
corresponding eigenvalues:
$$
\hat A |v_{a}\rangle=a|v_{a}\rangle.
$$
Given a system in a state $|\psi(t)\rangle$, we also know that the
probability for a measurement of the observable $\hat  A$ at time $t$
to give the result $a$ is given by
$$
{\mathcal P}_{a}=|\langle v_{a}| \psi(t)\rangle|^{2}.
$$ 
How can we recover this information by using the expectation value
function $ e_{A}$ defined by Equation (\ref{eq:2})?  Notice that we are
considering it as a function defined the space of states $M_{Q}$
obtained by realification of the Hilbert space ${\mathcal H}$. In this
context., the information about the spectrum is recovered  easily from
the  set of critical points of the function. Indeed, it is immediate
to prove that the function $e_{A}$ has a critical point at each
eigenvector $|v_{a}\rangle$ while the value that it takes at those
points of $M_{Q}$ is precisely the eigenvalue of the operator $\hat
A$:
\begin{equation}
  \label{eq:3}
  \hat A |v_{a}\rangle=a|v_{a}\rangle \Leftrightarrow
  \begin{cases}
    de_{A}(|v_{a}\rangle)=0 \\
e_{A}(|v_{a}\rangle)=a
  \end{cases}
\end{equation}


\subsection{The dynamics and the Poisson tensor}

We can also study the evolution of the system in the new picture. Let
us consider a pure state
\begin{equation}
  \label{eq:5}
  \rho_{\psi}(t)=\frac{|\psi(t)\rangle\langle\psi(t)|}{\langle\psi(t)|\psi(t)\rangle}
\end{equation}
and assume that the evolution is ruled by a one-parameter group of
transformations associated with a one-parameter group of unitary
transformations of the Hilbert space ${\mathcal H}$.  At the level of the
function $e_{A}$ we can write:
\begin{equation}
  \label{eq:6}
  \frac d{dt}\frac{\langle\psi|\hat
    A|\psi\rangle}{\langle\psi|\psi\rangle}=
\frac 1{\langle\psi(t)|\psi(t)} \left [ \left \langle \frac {d\psi(t)}{dt} |
\hat A \psi(t)\right \rangle +\left \langle  \psi(t)|
\hat A \frac {d\psi(t)}{dt} \right \rangle \right ];
\end{equation}
where we assume that the evolution preserves the norm of the state
$|\psi(t)\rangle$  and that $\hat A$ does not depend explicitely on
time. 

From Stone's theorem, we know that the unitary evolution on the
Hilbert space ${\mathcal H}$ is generated by a skew-hermitian generator in
the form:
\begin{equation}
  \label{eq:7}
  i\frac{d}{dt}|\psi(t)\rangle=\hat H |\psi(t)\rangle
\end{equation}
for some Hermitian operator $\hat H$.  If we introduce the commutator
of the operators $\hat H$ and $\hat A$ as
\begin{equation}
  \label{eq:8}
  [\hat H, \hat A]=i(\hat H\hat A-\hat A \hat H),
\end{equation}
Equation (\ref{eq:6}) can be written in terms of expectation value
functions as
\begin{equation}
  \label{eq:9}
 i \frac d{dt}e_{A}(\psi)=-\frac{\langle\psi|\hat H\hat
    A-\hat A \hat H|\psi\rangle}{\langle\psi|\psi\rangle}=ie_{[\hat H,
    \hat A]}(\psi).
\end{equation}
If we want to formulate completely the problem in terms of expectation
value functions, it makes sense to introduce an operation on this set
of functions, in particular a \textit{quantum Poisson bracket} defined
as
\begin{equation}
  \label{eq:10}
  \{ e_{H}, e_{A}\} :=e_{[\hat H, \hat A]}.
\end{equation}
We can extend this construction to the space of expectation value
functions by defining a bidifferential operator $\Omega_{{\mathcal P}}$ which
represents the Poisson tensor corresponding to the bracket above:
\begin{equation}
  \label{eq:14}
  \Omega_{{\mathcal P}} (de_{A}, de_{B}):=\{ e_{A}, e_{B}\} 
\end{equation}

Notice that these tensors, even if they are defined on the manifold
$M_{Q}$, are not the same as the tensor in  Eq 
(\ref{eq:29}). Indeed,  the functions are projectable under $\pi:M_Q\to {\mathcal P}$, but it
is simple to understand that the product under ($G$ and) $\Omega$ is not, since the tensors are
of degree -2, i.e., the Lie derivative of the tensors with 
respect to the dilation vector field $\Delta$ defined in
Eq. \eqref{eq:15} is 
$$
 {\mathcal L}_\Delta\Omega=-2\Omega.
$$
Thus, in order to make it projectable, we must rescale it by a factor
of degree two, for instance the square of norm of $|\psi\rangle$ which
is a central element

\begin{equation}
\label{eq:21b}
\{ e_A, e_B\}_{\mathcal P}:=\Omega_{\mathcal P}(de_A,
de_B)=\langle\psi|\psi\rangle \{e_A, e_B\}
\end{equation}

\subsection{Indetermination relations and the symmetric structure}
Another aspect that we have to take into account in our description is
the formulation in terms of expectation value funtions of another important aspect of Quantum Mechanics
as it is the indetermination relations.  This introduces the necessity
of bringing the tensor $G_{{\mathcal P}}$, defined in a similar way as we
did for $\Omega_{{\mathcal P}}$,  into play.

It is simple to verify that, given an operator $\hat A$, its squared
uncertainty (its variance) can be obtained from the expectation value
function $e_{A}$ as:
\begin{equation}
  \label{eq:11}
  (\Delta A)^{2}=\langle de_{A}|de_{A}\rangle=e_{A^{2}}-(e_{A})^{2}
\end{equation}
where we represent by $d$ the exterior differential in ${\mathcal H}$ and
we use the extension of the Hermitian structure to the differential
one-forms. 

But if we want to implement the variance directly at the level of the
projective space ${\mathcal P}$, we need to introduce a differential
operator which encodes the symmetric product of operators at the level
of the expectation value functions. We consider then the corresponding
tensor which precisely coincides with the tensor $G_{{\mathcal P}}$:
\begin{equation}
  \label{eq:13}
  G_{{\mathcal P}}(de_{A}, de_{B}):=e_{AB+BA}-e_{A}e_{B}
\end{equation}

This symmetric tensor $G_{{\mathcal P}}$ together with the tensor $\Omega_{{\mathcal P}}$ defined
in Eq. (\ref{eq:14}) endow the projective space with a Hermitian
bidifferential operator.  We shall work explitely the various aspects
of the construction by means of a very simple example: a two level
system defined on a Hilbert space ${\mathcal H}=\mathbb{C}^{2}$.

\subsubsection{Example II: the projective space for a two level quantum system}

Extending the example presented in Section \ref{sec:example-i:-hilbert}, we
can consider now the corresponding projective space and the
corresponding tensors.
 It is important to remark that while forms can not be
projected, contravariant tensor fields can. This is the reason why we
introduced the contravariant tensors $\Lambda$ and $G$. Thus by considering 
$$
G=\pd{q^1}\otimes \pd{q^1}+\pd{p_1}\otimes \pd{p_1}+\pd{q^2}\otimes
\pd{q^2}+\pd{p_2}\otimes \pd{p_2} ,
$$
we can consider the projection of the tensor. As it happens with the
Poisson tensor, it is immediate to
understand that such a tensor can not be projected directly, since it
is of degree two with respect to the dilations, i.e.
$$
{\mathcal L}_{\Delta}G=-2G
$$
 Therefore, we have to
consider a conformal factor and define (see \cite{Ercolessi2010}):
\begin{multline}
  \label{eq:project_g}
 G_{\mathcal P}=\langle \psi|\psi\rangle G-\Gamma\otimes \Gamma
 -\Delta\otimes \Delta=\\
=((q^1)^2+(q^2)^2+p_1^2+p_2^2)\left (
\pd{q^1}\otimes \pd{q^1}+\pd{p_1}\otimes \pd{p_1}+\pd{q^2}\otimes
\pd{q^2}+\pd{p_2}\otimes \pd{p_2}
  \right) -\\
\sum_{lm}\left ( p_{l}\frac{\partial}{\partial q^{l}}-
  q^{l}\frac{\partial}{\partial p_{l}}\right )
\otimes \left ( p_{m}\frac{\partial}{\partial q^{m}}-
  q^{m}\frac{\partial}{\partial p_{m}}\right ) - \\
\sum_{lm}\left ( q^{l}q^{m} \frac{\partial}{\partial q^{l}}\otimes
\frac{\partial}{\partial q^{m}} +   p_{l}p_{m}\frac{\partial}{\partial p_{l}}\otimes
\frac{\partial}{\partial p_{m}} \right )
\end{multline}

Analogously we can write the coordinate expression of the tensor $\Omega_{\mathcal P}$:
\begin{multline}
  \label{eq:lambdaH}
  \Omega_{\mathcal P}=\langle \psi |\psi\rangle \Omega-\Gamma\otimes
  \Delta  -\Delta\otimes \Gamma =
((q^1)^2+(q^2)^2+p_1^2+p_2^2)\left (\pd{q^1}\land
    \pd{p^1}+ \pd{q^2}\land \pd{p^2}\right ) -\\
-
\sum_{lm}\left ( p_{l}\frac{\partial}{\partial p_{l}}+
  q^{l}\frac{\partial}{\partial q^{l}}\right )
\otimes \left ( p_{m}\frac{\partial}{\partial q^{m}}-
  q^{m}\frac{\partial}{\partial p_{m}}\right )
-\\
\sum_{lm}\left ( p_{l}\frac{\partial}{\partial q^{l}}-
  q^{l}\frac{\partial}{\partial p_{l}}\right )
\otimes\left ( p_{l}\frac{\partial}{\partial p_{l}}+
  q^{l}\frac{\partial}{\partial q^{l}}\right )
\end{multline}

\subsection{Uncertainty  relations}

From the definition of the variance,  we can write the Robertson
version of the  uncetainty relations (see \cite{Robertson1929}) in a simple
form:
\begin{equation}
  \label{eq:12}
\Delta_{\psi} A \Delta_{\psi} B \geq \frac
  14 \langle \psi |[A, B]\psi\rangle^{2}
\end{equation}

We are going to obtain this well known expression within our Ehrenfest
picture. Let us consider an
arbitrary operator $F$ on ${\mathcal H}$. It is immediate that
$$
\langle \psi |F^{\dagger }F \psi\rangle \geq 0 ;\qquad \forall |\psi
\rangle\in {\mathcal H} .
$$
For simplicity, we will restrict the set of states to the sphere of
normalized states ${\mathcal S}$, i.e. we shall consider the inequality:

\begin{equation}
\langle \psi |F^{\dagger }F \psi\rangle \geq 0 ;\qquad \forall |\psi
\rangle\in {\mathcal S} .\label{eq:48}
\end{equation}

Let us  choose $F$ to be  the operator which is a complex linear
combination of two hermitian observables: 
\begin{equation}
  \label{eq:47}
  F=(A - \langle A\rangle_{\psi}\mathbb{I})+i\alpha (B-\langle
  B\rangle_{\psi}\mathbb{I}), \qquad \alpha \in \mathbb{R}
\end{equation}
where $\langle A\rangle_{\psi}$ and $\langle B\rangle_{\psi}$ represent the
expectation value of each observable  in a given state.

In this situation, the inequality (\ref{eq:48}), as a polynomial in
$\alpha$, corresponds to: 
\begin{equation}
  \label{eq:46}
  \alpha^{2}\left ( e_{B^{2}}(\psi) -e_{B}(\psi)^{2} \right )+ \alpha e_{[A, B]}(\psi)
  +  \left ( e_{A^{2}}(\psi) -e_{A}(\psi)^{2} \right ) \geq 0,
\end{equation}
where $[A, B]=i (AB-BA)$ to make it an inner operation in the set of
Hermitian operators.

The condition must hold for any value of $\alpha$ and hence we obtain
a condition on the roots, that can not be real. Then, we obtain:
\begin{equation}
  \label{eq:49}
  4 \left ( e_{B^{2}}(\psi) -e_{B}(\psi)^{2} \right )\left (
    e_{A^{2}}(\psi) -e_{A}(\psi)^{2} \right )  -e_{[A,
    B]}(\psi)^{2}\geq 0,
\end{equation}
or equivalently:
\begin{equation}
  \label{eq:50}
  \left ( e_{B^{2}}(\psi) -e_{B}(\psi)^{2} \right )\left (
    e_{A^{2}}(\psi) -e_{A}(\psi)^{2} \right )\geq \frac 14 e_{[A,B]}(\psi)^{2}.
\end{equation}

In this expression we recognize the usual formulation of the
uncertainty relation for two arbitrary operators if we write:
\begin{equation}
  \label{eq:51}
  (\Delta A)_{\psi}=\left (
    e_{A^{2}}(\psi) -e_{A}(\psi)^{2} \right )
\end{equation}
in this language, the relation becomes:
\begin{equation}
\Delta_{\psi}A \Delta_{\psi}B\geq \frac 14e_{[A,B]}(\psi)^{2}. 
 \label{eq:36}
  \end{equation}

This new expression allows us to write uncertainty relations by using
only tensors $G_{\mathcal P}$ and $\Omega_{{\mathcal P}}$:
\begin{equation}
  \label{eq:37}
  G_{{\mathcal P}}(de_{A}, de_{A})G_{{\mathcal P}}(de_{B}, de_{B}) \geq
  \frac 14\left (\Omega_{{\mathcal P}}(de_{A}, de_{B})\right )^{2}.
\end{equation}


It is also possible to provide an analogous formulation for
Schrödinger uncertainty relations (see \cite{schrodinger1930heisenbergschen}).
Consider the same Hermitian operators
$A$ and $B$ as above, and consider the expectation value of the
product
$$
K=K_{A}K_{B}= (A - \langle A\rangle_{\psi}\mathbb{I})(B - \langle B\rangle_{\psi}\mathbb{I})
$$
where $K_{A}=(A - \langle A\rangle_{\psi}\mathbb{I}$ and $K_{B}=(B -
\langle B\rangle_{\psi}\mathbb{I})$. 
 From Schwartz inequality, we can write:

 \begin{equation}
|\langle \psi| K\psi\rangle|^{2}=|\langle \psi
|K_{A}K_{B}|\psi\rangle|^{2}\leq \langle \psi|
K_{A}^{2}|\psi\rangle\langle \psi|K_{B}^{2}|\psi\rangle\label{eq:52}
\end{equation}

Now, we can replace the product $K_{A}K_{B}$ by:
\begin{multline*}
K_{A}K_{B}=\frac 12 (K_{A}K_{B}+K_{B}K_{A})+\frac 12
(K_{A}K_{B}-K_{B}K_{A})= \\ \frac 12 (K_{A}K_{B}+K_{B}K_{A})+\frac i2
[K_{A}, K_{B}].
\end{multline*}
We can write then:
$$
|\langle \psi |K_A K_{B}\psi\rangle|^{2}=\frac 14 \left ( \langle |
  \psi(K_{A}K_{B}+K_{B}K_{A})|\psi\rangle \right )^{2}+\frac 14
(\langle \psi |[K_{A}, K_{B}]\psi\rangle)^{2}.
$$
It is straightforward to verify that
$$
\langle \psi |[K_{A}, K_{B}]\psi\rangle=\langle \psi|[A,B]|\psi\rangle,
$$
because the identity operators trivially commute; and
$$
 \langle 
  \psi |(K_{A}K_{B}+K_{B}K_{A})|\psi\rangle =
\langle \psi | (AB+BA)|\psi\rangle-\langle \psi|A\psi\rangle\langle\psi|B|\psi\rangle.
$$

Analogously
$$
\langle \psi|
K_{A}^{2}|\psi\rangle=\langle \psi|A^{2}|\psi\rangle-\langle
\psi|A|\psi\rangle^{2}; 
\qquad
K_{B}^{2}|\psi\rangle=\langle \psi|B^{2}|\psi\rangle-\langle
\psi|B|\psi\rangle^{2}.
$$

We can then write Equation (\ref{eq:52}) as:
\begin{multline}
  \label{eq:53}
  (\langle \psi|A^{2}|\psi\rangle-\langle
\psi|A|\psi\rangle^{2})(\langle \psi|B^{2}|\psi\rangle-\langle
\psi|B|\psi\rangle^{2})\geq  \\ \frac 14 \left (\langle \psi |
  (AB+BA)|\psi\rangle-\langle
  \psi|A\psi\rangle\langle\psi|B|\psi\rangle \right )^{2} + \frac 14 \langle \psi|[A,B]|\psi\rangle^{2}
\end{multline}
or, analogously,
\begin{equation}
  \label{eq:54}
  (e_{A^{2}}(\psi)-e_{A}(\psi)^{2}) (e_{B^{2}}(\psi)-e_{B}(\psi)^{2})
  - \frac 14(e_{(A\circ B)}(\psi)-e_{A}(\psi)e_{B}(\psi))^{2} \geq \frac 14
  \langle \psi|[A,B]|\psi\rangle^{2} ,
\end{equation}
where $A\circ B=(AB+BA)$.

This expression, which is known as Schrödinger uncertainty relation,
can also be written in terms of the tensors $G_{{\mathcal P}}$ and
$\Omega_{{\mathcal P}}$ in the form:
\begin{equation}
  \label{eq:55}
  G_{{\mathcal P}}(de_{A}, de_{A}) G_{{\mathcal P}}(de_{B}, de_{B})-\frac 14
  G_{{\mathcal P}}(de_{A}, de_{B})^{2}\geq \frac 14 \Omega_{{\mathcal P}}(de_{A}, de_{B}).
\end{equation}
\section{Ehrenfest picture for mixed states: the two levels system}

In the previous section we have been able to construct, by using the
tensors which encode the Hermitian structure of the Hilbert space of
states, a formulation of Quantum Mechanics where the physical
observables are represented by the expectation value functions
associated to pure states of the physical system.
The next step is to consider the generalization to the case of mixed
states. For the sake of simplicity, we shall consider only the case of
the two level system that we have analyzed so far. In order to do
that, we shall begin by reformulating the construction above in terms
of rank-one projectors, and later we will be able to extend this new
framework to include arbitrary mixed states.

\subsection{Reformulation of pure states}
We know that the rank-one projectors defined on ${\mathcal H}$ are in
one-to-one correspondence with the points of the projective space
${\mathcal P}$. Indeed, we can write:
\begin{equation}
  \label{eq:18}
  \begin{pmatrix}
    \bar z_1 z_1 & \bar z_1 z_2 \\
    \bar z_2z_1 & \bar z_2z_2
  \end{pmatrix}=\rho_\psi=
y_0\sigma_0+y_1\sigma_1+y_2\sigma_2+y_3\sigma_3,
\end{equation}
where we have to impose that
$$
\mathrm{Tr} \rho_\psi=1 \Rightarrow y_0=\frac 12
$$
and
$$
\mathrm{Tr}\rho_\psi^2=\mathrm{Tr}\rho_\psi \Rightarrow
y_0^2+y_1^2+y_2^2+y_3^2=\frac 12 \Rightarrow 
y_1^2+y_2^2+y_3^2=\frac 1 4
$$
We conclude thus that the set of rank-one projectors on vectors of the
Hilbert space ${\mathcal H}=\mathbb{C}^{2}$, which is in one-to-one correspondence with the points
of the projective space ${\mathcal P}=\mathbb{CP}^{1}$ is diffeomorphic
to the two dimensional sphere $S^{2}$.

The coordinates $\{y_{0}, y_{1}, y_{2}, y_{3}\}$ can  be obtained 
from the properties of the 
Pauli matrices:

$$
y_{0}=\frac 12 \mathrm{Tr}(\sigma_{0}\rho_{\psi})=\frac 12
$$
 and
$$
y_{j}=\frac 12 \mathrm{Tr}(\sigma_{j}\rho_{\psi}).
$$

Expectation value functions can be defined for any Hermitian operator
in the form 
$$
A=a_{0}\sigma_{0}+\alpha_{1}\sigma_{1}+a_{2}\sigma_{2}+a_{3}\sigma_{3},
$$
as the evaluation of the operator on the state $\rho$:
\begin{equation}
  \label{eq:19}
  e_{A}(\rho)=\mathrm{Tr}(A\rho)=\mathrm{Tr} [A(\frac 12
  \sigma_{0}+y_{1}\sigma_{1}+y_{2}\sigma_{2}+ y_{3}\sigma_{3})]=a_{0}+2(a_{1}y_{1}+a_{2}y_{2}+a_{3}y_{3})
\end{equation}

The embedding of the set of pure states in this form, forces us to use
Lagrange multipliers in order to, for instance, determine the set of
critical points of the function $e_{A}(\rho)$ on the set of pure states. Thus, we compute the
extremal points of the function 
$$
f_{A}(\lambda,
\rho)=e_{A}(\rho)-\lambda(y_{1}^{2}+y_{2}^{2}+y_{3}^{2}-\frac 14). 
$$
We obtain thus:
$$
df_{A}(y^{\star})=0 \rightarrow
\begin{cases}
  2[(a_{1}-\lambda y_{1}^{\star} )dy_{1}+(a_{2}-\lambda
  y_{2}^{\star})dy_{2}+(a_{3}-\lambda y_{3}^{\star})dy_{3}]=0
  \\
(y_{1}^{\star})^{2}+(y_{2}^{\star})^{2}+(y_{3}^{\star})^{2}-\frac 14=0
\end{cases}
$$
This implies that:
$$
y_{1}^{\star}=\frac {a_{1}}{\lambda}; \quad y_{2}^{\star}=\frac {a_{2}}{\lambda};
\quad y_{3}^{\star}=\frac {a_{3}}{\lambda}
$$
and thus:
$$
a_{1}^{2}+a_{2}^{2}+a_{3}^{2}=\frac 14 \lambda^{2}\Rightarrow
\lambda=\pm \sqrt{4(a_{1}^{2}+a_{2}^{2}+a_{3}^{2}) }.
$$
The corresponding eigenvalue is obtained as:
\begin{equation}
  \label{eq:20}
  e_{A}(\rho^{\star})=a_{0}\pm\frac{2}{\sqrt{
      4(a_{1}^{2}+a_{2}^{2}+a_{3}^{2}) }}\left (
    a_{1}^{2}+a_{2}^{2}+a_{3}^{2}\right ) = a_{0}\pm \sqrt{(a_{1}^{2}+a_{2}^{2}+a_{3}^{2})}
  \end{equation}

If we consider a second operator
\begin{equation}
  \label{eq:21}
  B=b_{0}\sigma_{0}+b_{1}\sigma_{1}+b_{2}\sigma_{2}+b_{3}\sigma_{3}
\end{equation}
and the corresponding function $e_{B}(\rho)$, we can evaluate the
commutator and the skew-commutator:
\begin{equation}
  \label{eq:22}
  [A,B]:=i(AB-BA)
\end{equation}
\begin{equation}
  \label{eq:23}
  A\circ B=(AB+BA)
\end{equation}
and the corresponding functions:
\begin{equation}
  \label{eq:24}
  e_{[A,B]}(\rho)=4(a_{3}b_{2}-a_{2}b_{3})y_{1}+4(a_{1}b_{3}-a_{3}b_{1})y_{2}+4(a_{2}b_{1}-a_{1}b_{2})y_{3}; 
\end{equation}
and
\begin{equation}
  \label{eq:25}
  e_{A\circ
    B}(\rho)=4(a_{0}b_{0}+a_{1}b_{1}+a_{2}b_{2}+a_{3}b_{3})y_0+
4(a_1b_0+a_0b_1)y_1+4(a_2b_0+a_0b_2)y_2+4(a_3b_0+a_0b_3)y_3.
\end{equation}
From both epxressions we read therefore the coordinate expression of
the tensors representing the operations at the level of the
expectation value functions:
\begin{equation}
  \label{eq:26}
  G(\rho)=4\left ( y_0 \sum_{j=0}^4\frac{\partial}{\partial
      y_j}\otimes \frac{\partial}{\partial y_j}
    +\sum_{j=1}^3\left (y_j\frac{\partial}{\partial y_j} \otimes
    \frac{\partial}{\partial y_0}  +y_j\frac{\partial}{\partial y_0} \otimes
    \frac{\partial}{\partial y_j} \right )\right ) 
\end{equation}
\begin{equation}
  \label{eq:27}
  \Lambda(\rho)=\sum_{jkl=1}^3\epsilon^{jkl}y_j
  \frac{\partial}{\partial y_k}\wedge \frac{\partial}{\partial y_l} 
\end{equation}

\subsection{Ehrenfest picture of mixed states}

The set of mixed states ${\mathcal D}$ can be defined by relaxing the
condition defining the projector property of $\rho$, i.e.,

\begin{equation}
  \label{eq:17}
  {\mathcal D}=\{ \rho \in \mathfrak{u}^{*}({\mathcal H}) | \mathrm{Tr} \rho=1\}
\end{equation}

By using the same coordinates, we end up with the following coordinate
description:
\begin{equation}
  \label{eq:31}
  {\mathcal D}=\{ (y_{1}, y_{2}, y_{3})\in \mathbb{R}^{3} |
  y_{1}^{2}+y_{2}^{2}+y_{3}^{2}\leq \frac 34\} 
\end{equation}

Once this is specified, the construction is completely analogous to
the previous case. Expectation value functions are again defined as
\begin{equation}
  \label{eq:32}
  e_{A}(\rho)=\mathrm{Tr}(\rho A); \qquad \rho\in {\mathcal D}, 
\end{equation}
while dynamics is defined through the Poisson tensor $\Omega$ and
other physical properties as the uncertainty relations are encoded
again in the tensor $G$.

\section{Interference in the Ehrenfest picture}
\subsection{Describing interference on ${\mathcal D}$}
In the sections above we have been able to address some of the
problems which arise when we formulate Quantum Mechanics in the
Ehrenfest picture. In particular we have been able to define the
dynamics and the uncertainty relations in terms of geometric objects
which are defined either on the projective space ${\mathcal P}$ associated
to the Hilbert space ${\mathcal H}$ or on the space of density operators
${\mathcal D}$. 

Our final exercise addresses the problem of formulating interference
phenomena within the framework. Consider then that we are given a pair
of pure states $\rho_{1}$ and $\rho_{2}$ and that we want to find a
method to combine them into a new pure state. Thus we must be able to define a
procedure to combine two  rank-one projectors into a new one. In order
to do that, we need to fix a fiducial projector $P_{0}$ which will
allow us to consider the relative phases which are essential to
describe the interference process, diffraction or also the composition
of light polarization. 

Consider thus, given $\rho_{1}, \rho_{2}$ and $P_{0}$ as above, and
$p_{1}, p_{2}\in [0,1]$ with $p_{1}+p_{2}=1$,  the
operator
\begin{equation}
  \label{eq:38}
  \rho=p_{1}\rho_{1}+p_{2}\rho_{2}+\frac{\sqrt{p_{1}p_{2}}}{\mathrm{Tr}(\rho_{1}P_{0}\rho_{2}P_{0})} 
(\rho_{1}P_{0}\rho_{2}+h.c.).
\end{equation}
It is straightforward to prove that, with the conditions above, $\rho$
is also a pure state, i.e.,
$$
\rho^{2}= \rho; \qquad \mathrm{Tr} \rho=1.
$$
Besides, it is simple to check that
$$
\rho_{1}\rho\rho_{1}=p_{1}\rho_{1} ; \qquad \rho_{2}\rho\rho_{2}=p_{2}\rho_{2}.
$$

Notice that this composition law may be considered to provide us with
a purification of the state $\rho=p_{1}\rho_{1}+p_{2}\rho_{2}$ when
the two pure states are orthogonal,  and the projection on $P_{0 }$ of
both is different from zero, i.e.
$$
\rho_{1}\rho_{2}=0; \qquad P_{0}\rho_{1}\neq 0; \qquad
P_{0}\rho_{2}\neq 0.
$$

When the two pure states $\rho_{1}$ and $\rho_{2}$ are not orthogonal 

\begin{equation}
  \label{eq:39}
   \rho=p_{1}\rho_{1}+p_{2}\rho_{2}+\frac{\sqrt{p_{1}p_{2}}}{\mathrm{Tr}(\rho_{1}P_{0}\rho_{2}P_{0})} 
(\rho_{1}P_{0}\rho_{2}+h.c.)W^{{-1}},
\end{equation}
where 
\begin{equation}
  \label{eq:40}
  W=1+\frac{\sqrt{p_{1}p_{2}}}{\mathrm{Tr}(\rho_{1}P_{0}\rho_{2}P_{0})} 
\mathrm{Re}(\rho_{1}P_{0}\rho_{2}+h.c.)
\end{equation}

To summarize the construction, we can say that having chosen a
fiducial projector $P_{0}$, we are able to define a composition
procedure of pure states which is an inner operation.
For further details on this issue, we refer the interested reader to
\cite{Manko2002} and \cite{Ercolessi2010}. In the following we will
study the geometrical meaning of the superposition procedure.

The projector $P_{0}$ we introduced plays the rôle of the Pancharatnam
connection (see \cite{Pancha1956, Pancharatnam1956} )
 which encodes the geometrical description of Berry
phase (see \cite{Berry1984a,Simon1983, Anandan1992,Aharonov1987}. Indeed, it
provides a way to lift the physical states from the 
complex projective space into the Hilbert space.  Once we have vectors
in the Hilbert space, it is straightforward to evaluate the transition
probability from one vector to another. In brief, we choose the
fiducial projector as the rank one projector on a vector
$|\psi_{0}\rangle\in {\mathcal H}$:
\begin{equation}
  \label{eq:41}
  P_{0}=\frac{|\psi_{0}\rangle\langle\psi_{0}|}{\langle \psi_{0}|\psi_{0}\rangle},
\end{equation}
and then we are able to associate to the rank-one projector $\rho_{1}$
(respectively $\rho_{2}$)
the vector
\begin{equation}
  \label{eq:42}
  |\psi_{1}\rangle
  =\frac{1}{\sqrt{\langle\psi_{0}|\psi_{0}\rangle}}\rho_{1}|\psi_{0}\rangle;
  \qquad 
|\psi_{2}\rangle=\frac{1}{\sqrt{\langle\psi_{0}|\psi_{0}\rangle}}\rho_{2}|\psi_{0}\rangle.
\end{equation}
The transition probability between states $\rho_{1}$ and $\rho_{2}$
can be defined as
\begin{equation}
  \label{eq:43}
  {\mathcal P}_{{1-2}}=|\langle \psi_{1}|\psi_{2}\rangle|^{2}=\frac{\langle
  \psi_{0}|\rho_{1}\rho_{2}|\psi_{0}\rangle}{\langle \psi_{0}|\psi_{0}\rangle} =\mathrm{Tr}(\rho_{1}P_{0}\rho_{2}).
\end{equation}

Then, with that result, the composition of the two states $|\psi_{1}\rangle$ and
$|\psi_{2}\rangle$ allows us to describe  properly the interference phenomena.

\subsection{Example: the case of a two-level system}
If we consider now the example of the two level system, we know that
the projective space is diffeomorphic to the two dimensional sphere
$S^{2}$.  If we want to define a linear structure on the sphere with
the help of the projector $P_{0}$, we must exclude the points which
are orthogonal to $|\psi_{0}\rangle$.  We consider thus as set of
representable states 
$$
V=S^{2}-\{ P_{0}^{\perp}\}.
$$
The linear structure that we are defining on $V$ is analogous to the
one we obtain if we consider a point $s_{0}\in S^{2}$, the
corresponding tangent space $T_{s_{0}}S^{2}$; and with the help of any
second order differential equation on $S^{2}$  we define the time-one
map to the sphere by means of the flow $\Phi:\mathbb{R}\times
TS^{2}\to TS^{2}$, i.e.:
$$
\Phi(t, s_{0}, v)|_{t=1}; \qquad v\in T_{s_{0}}S^{2}.
$$
Then, to any point in $s\in S^{2}$ (except the focal point of $s_{0}$, this
is the reason to exclude $P_{0}^{\perp}$), we can associate a vector
$v$ in the tangent space $T_{s_{0}}S^{2}$:
\begin{equation}
  \label{eq:44}
  s=\Phi(t=1, s_{0}, v); \qquad S^{2}\ni s\leftrightarrow  v\in T_{s_{0}}S^{2}.
\end{equation}
It is natural to consider a particular case for this second order
differential equation, as is the geodetic motion defined on the
sphere. Thus the mapping $\Phi$ corresponds to the exponential
mapping associated with the Riemannian structure of the projective space.

Then, given two arbitrary points $s_{1}, s_{2}\in V$, we can associate
a third element:
\begin{equation}
  \label{eq:45}
  \begin{cases}
    s_{1}=\Phi(t=1, s_{0}, v_{1}) \\
s_{2}=\Phi(t=1, s_{0}, v_{2})
  \end{cases}
 \Rightarrow  s_{1}\star s_{2}=\Phi(t=1, s_{0}, v_{1}+v_{2})
\end{equation}

Once the linear structure has been implemented, the description of
interference phenomena reduces to incoporate the scalar product of the
physical states. Within our new framework, this can be 
accomplished by using the tensors $G_{{\mathcal P}}$ and $\Omega_{{\mathcal
    P}}$ defined in Equations (\ref{eq:project_g}) and
(\ref{eq:lambdaH}). We can define thus:
\begin{equation}
  \label{eq:33}
  (s_{1}, s_{2}):=G_{{\mathcal P}}(v_{1}, v_{2})+i \Omega_{{\mathcal
      P}}(v_{1}, v_{2}).
\end{equation}

Summarizing:  on the projective space ${\mathcal P}\sim S^{2}$ minus the
focal point to a fiducial point $s_{0}$, we can induce a vector space
structure by using the linear structure of the vector space
$T_{s_{0}}S^{2}$.  Clearly, if two different fiducial points $s_{0},
s_{0}'$ are used, we induce two alternative linear structures on the
space $S^{2}$ excluding the two focal points.  The transition
function from one linear structure to the other defines a nonlinear
map. This construction shows the importance of alternative linear
structures in Quantum Mechanics and how the usual linear structure of
the Hilbert space formalism is an ingredient chosen by the observer,
via the fiducial point used as reference for the interference
phenomena.


\section*{Acknowledgments}
G.M. would like to acknowledge the support provided by the Santander/UCIIIM
Chair of Excellence programme 2011-2012. The work of J. C-G has been
partially supported by Grants DGA 24/1, MICINN MTM2012-33575  and
UZ2012CIE-06


\end{document}